\newcommand{\wt}{\widetilde}
\newcommand{\R}{{\mathbb  R}}  \numberwithin{equation}{section} \newtheorem{thm}{\bf Theorem}[section]
\newtheorem{lem}[thm]{\bf Lemma}  
\newtheorem{cor}[thm]{\bf Corollary}
\DeclareMathOperator{\tr}{tr}
\begin{document}

\title{Laplace-Beltrami operator on the orthogonal group in Cartesian coordinates}

\author{Petre Birtea, Ioan Ca\c su, Dan Com\u{a}nescu\\
{\small Department of Mathematics, West University of Timi\c soara} \\
{\small Bd. V. P\^ arvan, No 4, 300223 Timi\c soara, Rom\^ania}\\
{\small Email: petre.birtea@e-uvt.ro, ioan.casu@e-uvt.ro, dan.comanescu@e-uvt.ro}}

\date{}

\maketitle

\begin{abstract}
Using the embedded gradient vector field method (see P. Birtea, D. Com\u anescu, Hessian operators on constraint manifolds, J Nonlinear Sci 25, 2015), we present a general formula for the Laplace-Beltrami operator defined on a constraint manifold, written in the ambient coordinates. 
Regarding the orthogonal group as a constraint submanifold of the Euclidean space of $n\times n$ matrices, we give an explicit formula for the Laplace-Beltrami operator on the orthogonal group using the ambient Euclidean coordinates. 
We apply this new formula for some relevant functions. 
\end{abstract}

\noindent {\bf Keywords:} 
Laplace-Beltrami operator; orthogonal group; constraint manifold.\\
{\bf MSC Subject Classification:} 15B10, 53Bxx,58Cxx
%% keywords here, in the form: keyword \sep keyword

%% PACS codes here, in the form: \PACS code \sep code

%% MSC codes here, in the form: \MSC code \sep code
%% or \MSC[2008] code \sep code (2000 is the default)

%\end{keyword}

%\end{frontmatter}

\section{Introduction}

In the last decades, there was a great interest in the study of the Laplace operator on Riemannian manifolds (frequently named Laplace-Beltrami operator), see \cite{shubin}, \cite{berger}, \cite{bakry}, \cite{bakry-3}, \cite{olafsson}.

The purpose of this section is to give a formula for the Laplace-Beltrami operator on a manifold that is the preimage of a regular value for a set of constraint functions. We will apply this formula for the particular cases of the sphere and of the orthogonal group, especially by computing the Laplacian for certain functions that appear in the analysis of the heat kernel on the orthogonal group (\cite{levy}, \cite{fulman}).

Let $(M,{\bf g})$ be a Riemannian manifold of dimension $m$, $(u_1,\dots,u_m)$ a local system of coordinates and $\frac{\partial }{\partial u_1},\dots, \frac{\partial }{\partial u_m}$ the corresponding local frame. For $f:M\to \R$ a smooth function, the Laplace-Beltrami operator is defined by the formula\footnote{By convention, we take the sign $+$ in the definition of the Laplace-Beltrami operator.} (see \cite{abraham}, \cite{berger})
\begin{equation*}
\Delta_Mf({\bf u})=\tr\left([{\bf g}({\bf u})]^{-1}[\text{Hess}f]({\bf u})\right),
\end{equation*}
where $[{\bf g}({\bf u})]$ is the associated matrix of the Riemannian metric ${\bf g}$ in the local coordinates  and $[\text{Hess}f]({\bf u})$ is the Hessian matrix. 

Let $F=(F_1,\dots,F_k):M\to \R^k$ be a smooth set of constraint functions, with $k<m$, and $S_{\bf c}:=F^{-1}({\bf c})$, where ${\bf c}$ is a regular value for $F$. The induced Riemannian structure is $(S_{\bf c},{\bf g}_{\bf c})$ with ${\bf g}_{\bf c}={\bf g}\left|_{_{TS_{\bf c}\times TS_{\bf c}}}\right.$. 

We choose $\{{\bf t}_1,\dots,{\bf t}_{m-k}\}$ an adapted local frame on $S_{\bf c}$, i.e. ${\bf t}_i\in \mathcal{X}(M)$ such that ${\bf t}_i({\bf u})\in T_{\bf u}S_{\bf c}$, for all ${\bf u}$ in the domain of the system of local coordinates. Regarding ${\bf t}_i({\bf u})$ as column vectors expressed in the local frame $\frac{\partial }{\partial u_1},\dots, \frac{\partial }{\partial u_m}$, we obtain the transformation matrix 
\begin{equation}\label{transformation-matrix-11}
T:=[{\bf t}_1, \dots, {\bf t}_{m-k}]\in \mathcal{M}_{m\times (m-k)}(\R). 
\end{equation}
The matrix form for the induced Riemannian metric ${\bf g}_{\bf c}$ is 
\begin{equation}\label{g_c}
[{\bf g}_{\bf c}]=T^t[{\bf g}]T.
\end{equation}

For $\widetilde{f}:S_{\bf c}\to \R$ a smooth function, the Laplace-Beltrami operator is given by
\begin{equation}\label{Laplacian-Sc}
\Delta_{S_{\bf c}}\widetilde{f}({\bf u})=\tr\left([{\bf g}_{\bf c}({\bf u})]^{-1}[\text{Hess}_{S_{\bf c}}\widetilde{f}]({\bf u})\right),\,\,\forall{\bf u}\in S_{\bf c}.
\end{equation}
The above formula can be written in the ambient geometry of the manifold $M$. 
Let $f:M\to \R$ be a smooth prolongation of $\widetilde{f}$, i.e. $\widetilde{f}=f_{|S_{\bf c}}$, then (see \cite{Birtea-Comanescu-Hessian})
\begin{equation*}
\text{Hess}_{S_{\bf c}}\widetilde{f}=\text{Hess}f_{_{|TS_{\bf c}\times TS_{\bf c}}}-\sum_{\alpha=1}^k\sigma_{\alpha}\text{Hess}F_{{\alpha}|{_{|TS_{\bf c}\times TS_{\bf c}}}}
\end{equation*}
or equivalently, in matrix form,
\begin{equation}\label{HessT}
\left[\text{Hess}_{S_{\bf c}}\widetilde{f}\right]=T^t\left[\text{Hess}f\right]T-\sum_{\alpha=1}^k\sigma_{\alpha}T^t\left[\text{Hess}F_{\alpha}\right]T.
\end{equation}
The functions $\sigma_{\alpha}$ are the so called Lagrange multiplier functions, defined in \cite{Birtea-Comanescu-Hessian}, and are given by 
\begin{equation}\label{sigma-101}
\sigma_{\alpha}({\bf u}):=\frac{\det \left(\text{Gram}_{(F_1,\ldots ,F_{\alpha-1},f, F_{\alpha+1},\dots,F_k)}^{(F_1,\ldots , F_{\alpha-1},F_\alpha, F_{\alpha+1},...,F_k)}({\bf u})\right)}{\det\left(\text{Gram}_{(F_1,\ldots ,F_k)}^{(F_1,\ldots ,F_k)}({\bf u})\right)},
\end{equation}
where
\begin{equation*}\label{sigma}
\text{Gram}_{(g_1,...,g_s)}^{(f_1,...,f_r)}=\left[%
\begin{array}{cccc}
  {\bf g}(\nabla_{\bf g} g_1,\nabla_{\bf g}f_{1}) & ... & {\bf g}(\nabla_{\bf g} g_s,\nabla_{\bf g} f_{1}) \\
  \vdots & \ddots & \vdots \\
  {\bf g}(\nabla_{\bf g} g_1,\nabla_{\bf g}f_r) & ... & {\bf g}(\nabla_{\bf g} g_s,\nabla_{\bf g} f_r) 
\end{array}%
\right],
\end{equation*}
where $\nabla_{\bf g} h$ denotes the Riemannian gradient of the function $h$ with respect to the Riemannian metric ${\bf g}$.

As a consequence of formulas \eqref{g_c}, \eqref{Laplacian-Sc}, and \eqref{HessT}, we obtain the following result.
\begin{thm}\label{teorema-generala}
Choosing an adapted local frame  $\{{\bf t}_1,\dots,{\bf t}_{m-k}\}$, the Laplace-Beltrami operator on a constraint manifold $S_c$, written in the coordinates of the ambient space $M$, has the formula
\begin{equation*}
\Delta_{S_{\bf c}}\widetilde{f}=\tr\left((T^t[{\bf g}]T)^{-1}T^t[\emph{Hess}{f}]T\right) - \sum_{\alpha=1}^k\sigma_{\alpha}\tr\left((T^t[{\bf g}]T)^{-1}T^t\left[\emph{Hess}F_{\alpha}\right]T\right).
\end{equation*}
For the particular case when the ambient manifold is the Euclidean space, $[{\bf g}]=\mathbb{I}_m$, we have the formula
\begin{equation}\label{laplace-general-101}
\Delta_{S_{\bf c}}\widetilde{f}=\tr\left(T^+[\emph{Hess}{f}]T\right) - \sum_{\alpha=1}^k\sigma_{\alpha}\tr\left(T^+\left[\emph{Hess}F_{\alpha}\right]T\right),
\end{equation}
where the matrix $T^+:=(T^tT)^{-1}T^t$ is the (left) Moore-Penrose inverse of the matrix $T$.
\end{thm}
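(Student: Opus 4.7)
The statement is essentially a direct consequence of the three formulas already assembled in the excerpt, namely \eqref{g_c}, \eqref{Laplacian-Sc}, and \eqref{HessT}. My plan is therefore to carry out the substitution carefully and record the Euclidean specialization.

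First, I would start from the intrinsic formula \eqref{Laplacian-Sc},
\[
\Delta_{S_{\bf c}}\widetilde{f}({\bf u})=\tr\!\left([{\bf g}_{\bf c}({\bf u})]^{-1}[\text{Hess}_{S_{\bf c}}\widetilde{f}]({\bf u})\right),
\]
and replace the metric matrix using \eqref{g_c}, giving $[{\bf g}_{\bf c}]^{-1}=(T^t[{\bf g}]T)^{-1}$. Since ${\bf c}$ is a regular value for $F$ and $\{{\bf t}_1,\dots,{\bf t}_{m-k}\}$ is an adapted frame (so $T$ has full column rank on the domain of the chart), the matrix $T^t[{\bf g}]T$ is invertible, which justifies the substitution.

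Next, I would plug in formula \eqref{HessT} for the matrix of $\text{Hess}_{S_{\bf c}}\widetilde{f}$, obtaining
\[
\Delta_{S_{\bf c}}\widetilde{f}=\tr\!\left((T^t[{\bf g}]T)^{-1}\Big(T^t[\text{Hess}f]T-\sum_{\alpha=1}^k\sigma_{\alpha}T^t[\text{Hess}F_{\alpha}]T\Big)\right).
\]
Linearity of the trace splits this into the two summands appearing in the theorem, where the functions $\sigma_\alpha$ are evaluated at ${\bf u}\in S_{\bf c}$ and are given by \eqref{sigma-101}. This yields the general formula.

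For the Euclidean specialization I would simply set $[{\bf g}]=\mathbb{I}_m$, which turns $(T^t[{\bf g}]T)^{-1}T^t$ into $(T^tT)^{-1}T^t$. Since $T$ has full column rank, this is by definition the left Moore–Penrose inverse $T^+$, and the formula \eqref{laplace-general-101} follows at once. There is no real obstacle in this argument; the only point one should be careful about is to verify that $T^t[{\bf g}]T$ is everywhere invertible on the chart domain, which follows from the fact that $\{{\bf t}_1,\dots,{\bf t}_{m-k}\}$ spans the tangent space $T_{\bf u}S_{\bf c}$ at each point and that ${\bf g}$ restricts to a nondegenerate form there.
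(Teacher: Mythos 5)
Your proposal is correct and matches the paper's own (implicit) argument exactly: the theorem is stated as a direct consequence of substituting \eqref{g_c} and \eqref{HessT} into \eqref{Laplacian-Sc} and using linearity of the trace, with the Euclidean case obtained by setting $[{\bf g}]=\mathbb{I}_m$. Your added remark on the invertibility of $T^t[{\bf g}]T$ is a sensible (and harmless) supplement to what the paper leaves unstated.
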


\section{Laplace-Beltrami operator on the sphere}

According to \eqref{laplace-general-101}, in order to write down the formula for the Laplace-Beltrami operator on the sphere $S^{n-1}_R:=\{{\bf x}\in \R^n\,|\,\sum\limits_{i=1}^nx_i^2=R^2\}$ using Euclidean coordinates, we need to introduce a local frame on the sphere. Inspired by the construction of a local frame on the Stiefel manifold, see \cite{second-order}, page 11, we consider the local frame as follows\footnote{We denote by ${\bf e}_1,\dots,{\bf e}_p$ the canonical basis in $\R^p$ and with $\left<\cdot,\cdot\right>$ the scalar product on $\R^p$.}: for ${\bf x}\in S^{n-1}_R$, with $x_j\neq 0$,  
\begin{equation}\label{local-frame-sphere}
{\bf t}_i=R^2{\bf e}_i-x_i{\bf x},\,\,\,i\in \{1,\dots,n\}\backslash\{j\}.
\end{equation}
In this case, the transformation matrix $T$ introduced in \eqref{transformation-matrix-11} is a $n\times (n-1)$ matrix. 

%The element $(k,l)$ of the product matrix $T^tT$, $k,l\in \{1,\dots,n\}\backslash\{j\}$, is given by
%\begin{equation*}
%(T^tT)_{kl}=\left<{\bf t}_k,{\bf t}_l\right>=R^4\delta_{kl}-R^2x_kx_l.
%\end{equation*}
%Consequently\footnote{${\bf x} {\bf x}^t$ is the usual matrix multiplication between a $n\times 1$ matrix and a $1\times n$ matrix.}, 
%\begin{equation*}
%T^tT{=R^4\left(\mathbb{I}_{n-1}-\frac{1}{R^2}{\bf x}_{\widehat{\jmath}}\,{\bf x}^t_{\widehat{\jmath}}\right)},
%\end{equation*} 
%where ${\bf x}_{\widehat{\jmath}}:=(x_1,\dots,\widehat{x_{j}},\dots,x_n)^t$ is obtained from ${\bf x}$ by removing the coordinate $x_j$.

We denote ${\bf x}_{\widehat{\jmath}}:=(x_1,\dots,\widehat{x_{j}},\dots,x_n)^t$ the column matrix obtained from ${\bf x}$ by removing the coordinate $x_j$.

\begin{lem}\label{lema1} For the local frame chosen in \eqref{local-frame-sphere}, we have:
\begin{enumerate}[(i)]
\item 
%The matrix $T^tT$ is invertible and its inverse is 
$(T^tT)^{-1}{=\dfrac{1}{R^4}\left(\mathbb{I}_{n-1}+\dfrac{1}{x_{j}^2}{\bf x}_{\widehat{\jmath}}\,{\bf x}^t_{\widehat{\jmath}}\right)}.$
\item $TT^+=\mathbb{I}_n-\dfrac{1}{R^2}{\bf x}\,{\bf x}^t$.
\end{enumerate}
\end{lem}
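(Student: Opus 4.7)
The plan is to reduce everything to a single compact expression for the matrix $T$ and then compute $T^tT$ and $TT^t$ in block form. Introduce the $n\times(n-1)$ matrix $E_{\widehat{\jmath}}$ obtained from $\mathbb{I}_n$ by deleting the $j$-th column. Then the definition \eqref{local-frame-sphere} of the local frame rewrites as
\begin{equation*}
T = R^2 E_{\widehat{\jmath}} - {\bf x}\,{\bf x}^t_{\widehat{\jmath}}.
\end{equation*}
Three elementary identities will drive all the algebra: $E_{\widehat{\jmath}}^t E_{\widehat{\jmath}} = \mathbb{I}_{n-1}$, $E_{\widehat{\jmath}}^t {\bf x} = {\bf x}_{\widehat{\jmath}}$, and $\|{\bf x}\|^2 = R^2$, together with the consequence $\|{\bf x}_{\widehat{\jmath}}\|^2 = R^2 - x_j^2$.

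For part (i), I would multiply out $T^tT$ using the three identities above; the two cross terms and the quadratic term combine to yield
\begin{equation*}
T^tT = R^4 \mathbb{I}_{n-1} - R^2\,{\bf x}_{\widehat{\jmath}}\,{\bf x}^t_{\widehat{\jmath}},
\end{equation*}
a rank-one perturbation of a scalar multiple of the identity. Inverting via the Sherman--Morrison formula, with $\|{\bf x}_{\widehat{\jmath}}\|^2 = R^2 - x_j^2$ feeding into the denominator $1 - \frac{1}{R^2}(R^2 - x_j^2) = x_j^2/R^2$, produces exactly the claimed expression. (If one prefers, the same formula can be verified by direct multiplication with $T^tT$ and using $\|{\bf x}_{\widehat{\jmath}}\|^2 = R^2 - x_j^2$ to check that all terms collapse.)

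For part (ii), I would avoid expanding $T(T^tT)^{-1}T^t$ directly and instead use the fact that $TT^+ = T(T^tT)^{-1}T^t$ is always the orthogonal projector onto the column space of $T$. The columns of $T$ satisfy $\langle {\bf t}_i,{\bf x}\rangle = R^2 x_i - x_i R^2 = 0$, so $\operatorname{im}(T)\subseteq \{{\bf x}\}^\perp$; the invertibility of $T^tT$ established in (i) shows the $n-1$ columns are linearly independent, hence $\operatorname{im}(T) = \{{\bf x}\}^\perp$. The orthogonal projector onto this hyperplane is $\mathbb{I}_n - \frac{1}{R^2}{\bf x}\,{\bf x}^t$, which is the claimed identity.

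There is no real obstacle here; the only subtlety is the bookkeeping in (i), where one has to notice that the three terms linear and quadratic in ${\bf x}_{\widehat{\jmath}}{\bf x}^t_{\widehat{\jmath}}$ partially cancel to leave a single rank-one correction. Once $T^tT$ is in rank-one-update form, Sherman--Morrison finishes (i), and the projector interpretation bypasses all computation in (ii).
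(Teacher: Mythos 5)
Your proposal is correct. For part (i) you follow essentially the same route as the paper: your decomposition $T=R^2E_{\widehat{\jmath}}-{\bf x}\,{\bf x}^t_{\widehat{\jmath}}$ is the same as the paper's $T=R^2PJ$ with $P=\mathbb{I}_n-\frac{1}{R^2}{\bf x}{\bf x}^t$ and $J=E_{\widehat{\jmath}}$ (since $PE_{\widehat{\jmath}}=E_{\widehat{\jmath}}-\frac{1}{R^2}{\bf x}{\bf x}^t_{\widehat{\jmath}}$), the resulting $T^tT=R^4\mathbb{I}_{n-1}-R^2{\bf x}_{\widehat{\jmath}}{\bf x}^t_{\widehat{\jmath}}$ agrees with the paper's, and both invert it by Sherman--Morrison with the same denominator $x_j^2/R^2$. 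For part (ii), however, your route is genuinely different and shorter: the paper expands $T(T^tT)^{-1}T^t$ explicitly, introduces the auxiliary matrix $V=PJQJ^t$, verifies $V^2=V$, and then argues via ranks that $\mathrm{range}(V)=\mathrm{range}(P)$ so that $VP=P$; you instead invoke the standard fact that $T(T^tT)^{-1}T^t$ is the (unique, symmetric, idempotent) orthogonal projector onto $\mathrm{im}(T)$, identify $\mathrm{im}(T)=\{{\bf x}\}^\perp$ from $\langle {\bf t}_i,{\bf x}\rangle=0$ together with the full column rank guaranteed by (i), and write down the projector onto that hyperplane. Your argument buys conceptual clarity and avoids the paper's bookkeeping with $V$; the paper's version is more self-contained in that it verifies the projector identities by hand rather than citing the general characterization. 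Both are complete and correct.
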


\begin{proof} 
$(i)$ We denote\footnote{The matrix ${\bf x} {\bf x}^t$ is the usual matrix multiplication between a $n\times 1$ matrix and a $1\times n$ matrix.} $P:=\mathbb{I}_n-\frac{1}{R^2}{\bf x}{\bf x}^t$ and $J:=\left[{\bf e}_1,\dots,\widehat{{\bf e}_j},\dots, {\bf e}_n\right]$ ($J$ is the $n\times (n-1)$ matrix obtained from $\mathbb{I}_n$ by removing column $j$). We notice that $T=R^2PJ$ and, also, that
$$P^2=\mathbb{I}_n-\frac{2}{R^2}{\bf x}{\bf x}^t+\frac{1}{R^4}({\bf x}{\bf x}^t)({\bf x}{\bf x}^t)
=\mathbb{I}_n-\frac{2}{R^2}{\bf x}{\bf x}^t+\frac{1}{R^4}({\bf x}^t{\bf x})({\bf x}{\bf x}^t)
=\mathbb{I}_n-\frac{2}{R^2}{\bf x}{\bf x}^t+\frac{1}{R^2}({\bf x}{\bf x}^t)=P,
$$
hence $P$ is a projector matrix.\\
It is easy to see that
$$J^tJ=\mathbb{I}_{n-1}; ~~J^t{\bf x}={\bf x}_{\widehat{\jmath}},~{\bf x}^tJ={\bf x}_{\widehat{\jmath}}^t~.$$
It follows that
\begin{align*}
T^tT&=(R^2J^tP)(R^2PJ)=R^4J^tPJ=R^4J^t\left(\mathbb{I}_n-\frac{1}{R^2}{\bf x}{\bf x}^t\right)J\\
&=R^4\left(J^tJ-\frac{1}{R^2}(J^t{\bf x})({\bf x}^tJ)\right)=R^4\left(\mathbb{I}_{n-1}-\frac{1}{R^2}{\bf x}_{\widehat{\jmath}}\,{\bf x}^t_{\widehat{\jmath}}\right).
\end{align*}
Using the Sherman-Morrison formula (see, for example, \cite{gentle}, p. 221) we obtain
\begin{align*}(T^tT)^{-1}&=\frac{1}{R^4}\left(\mathbb{I}_{n-1}+\frac{1}{1-\frac{1}{R^2}{\bf x}^t_{\widehat{\jmath}}\,{\bf x}_{\widehat{\jmath}}}\cdot \frac{1}{R^2}{\bf x}_{\widehat{\jmath}}\,{\bf x}^t_{\widehat{\jmath}}\right)\\
&=\frac{1}{R^4}\left(\mathbb{I}_{n-1}+\frac{1}{R^2-(R^2-x_j^2)}{\bf x}_{\widehat{\jmath}}\,{\bf x}^t_{\widehat{\jmath}}\right)=\dfrac{1}{R^4}\left(\mathbb{I}_{n-1}+\dfrac{1}{x_{j}^2}{\bf x}_{\widehat{\jmath}}\,{\bf x}^t_{\widehat{\jmath}}\right).\end{align*}
\noindent $(ii)$ If we denote $Q:=\mathbb{I}_{n-1}+\dfrac{1}{x_{j}^2}{\bf x}_{\widehat{\jmath}}\,{\bf x}^t_{\widehat{\jmath}}~$, we have obtained in $(i)$ that $$\mathbb{I}_{n-1}= (T^tT)\left(\dfrac{1}{R^4}Q\right) =(R^4J^tPJ)\left(\dfrac{1}{R^4}Q\right)=J^tPJQ.$$ If we consider the matrix
$$V:=PJQJ^t,$$
we easily get 
$$V^2=(PJQJ^t)(PJQJ^t)=PJQ(J^tPJQ)J^t=PJQ\mathbb{I}_{n-1}J^t=V,$$
therefore $V$ is also a projector matrix. Additionally, from $J^tV=J^t$, we have that the range of $V$ is the same as the range of $P$ (since $\hbox{range}(V)\subseteq \hbox{range}(P)$ and $\hbox{rank}(V)=\hbox{rank}(P)=n-1$), leading to $VP=P$ (see, for example, \cite{yanai}, p. 35). Then,
$$TT^+=T(T^tT)^{-1}T^t=(R^2PJ)\left(\dfrac{1}{R^4}Q\right)(R^2J^tP)=(PJQJ^t)P=VP=P=\mathbb{I}_n-\frac{1}{R^2}{\bf x}{\bf x}^t,$$
which completely proves the Lemma.
\end{proof}

For the case of the sphere $S_R^{n-1}$ we have the constraint function $F({\bf x})=\sum\limits_{i=1}^n x_i^2$ and, applying the formula \eqref{sigma-101}, the Lagrange multiplier function is 
\begin{equation*}
\sigma({\bf x})=\frac{\left<\nabla F({\bf x}),\nabla f({\bf x})\right>}{\|\nabla F({\bf x})\|^2}=\frac{1}{2R^2}\left<{\bf x}, \nabla f({\bf x})\right>,
\end{equation*}
where $f:\R^n\to \R$ is a smooth prolongation of a smooth function $\widetilde{f}:S_R^{n-1}\to \R$. 

We now have all the necessary ingredients in order to compute the Laplace-Beltrami operator for the function $\widetilde{f}$ applying formula \eqref{laplace-general-101}. 

\begin{thm}
For ${\bf x}\in S_R^{n-1}\subset \R^n$ we have the formula
\begin{equation}\label{Laplace-sfera-2022}
\left(\Delta_{S_R^{n-1}}\widetilde{f}\right)({\bf x})=\left(\Delta_{\R^n}f\right)({\bf x})-\frac{n-1}{R^2}\left<{\bf x}, \nabla f({\bf x})\right>-\frac{1}{R^2}{{\bf x}^t}\left[\emph{Hess}f\right]({\bf x}){\bf x}.
\end{equation}
\end{thm}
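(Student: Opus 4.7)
The plan is to apply formula \eqref{laplace-general-101} directly, with a single constraint $F({\bf x}) = \sum_{i=1}^n x_i^2$ (so $k=1$, ${\bf c} = R^2$). The central observation is that the matrix $T$ itself is cumbersome, but only the combination $TT^+$ appears if we use the cyclic property of the trace: $\tr(T^+ A T) = \tr(A\, TT^+)$. This bypasses any need for explicit local coordinates and lets us plug in the clean expression from Lemma \ref{lema1}(ii).

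First, I would rewrite each trace in \eqref{laplace-general-101} using cyclicity:
\begin{equation*}
\tr\left(T^+[\text{Hess}f]T\right) = \tr\left([\text{Hess}f]\, TT^+\right), \qquad \tr\left(T^+[\text{Hess}F]T\right) = \tr\left([\text{Hess}F]\,TT^+\right).
\end{equation*}
Substituting $TT^+ = \mathbb{I}_n - \tfrac{1}{R^2}{\bf x}{\bf x}^t$ from Lemma \ref{lema1}(ii), and using the standard identities $\tr([\text{Hess}f]) = \Delta_{\R^n} f$ and $\tr([\text{Hess}f]\,{\bf x}{\bf x}^t) = {\bf x}^t [\text{Hess}f]\,{\bf x}$, the first trace becomes
\begin{equation*}
\tr\left(T^+[\text{Hess}f]T\right) = \Delta_{\R^n}f({\bf x}) - \frac{1}{R^2}\,{\bf x}^t[\text{Hess}f]({\bf x}){\bf x}.
\end{equation*}

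Next, since $[\text{Hess}F] = 2\,\mathbb{I}_n$, the second trace evaluates to
\begin{equation*}
\tr\left(T^+[\text{Hess}F]T\right) = 2\tr\!\left(\mathbb{I}_n - \tfrac{1}{R^2}{\bf x}{\bf x}^t\right) = 2\left(n - \tfrac{1}{R^2}\|{\bf x}\|^2\right) = 2(n-1),
\end{equation*}
using $\|{\bf x}\|^2 = R^2$ on the sphere. Combined with the Lagrange multiplier $\sigma({\bf x}) = \tfrac{1}{2R^2}\langle {\bf x}, \nabla f({\bf x})\rangle$ computed above, the second term contributes precisely $\tfrac{n-1}{R^2}\langle {\bf x}, \nabla f({\bf x})\rangle$, and assembling the two pieces in \eqref{laplace-general-101} yields \eqref{Laplace-sfera-2022}.

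There is no real obstacle here: all the work has already been done in Lemma \ref{lema1} and in the computation of $\sigma$. The only point one must be careful with is the cyclic-trace maneuver and correctly identifying ${\bf x}^tA{\bf x} = \tr(A{\bf x}{\bf x}^t)$; everything else is bookkeeping and the normalization $\|{\bf x}\|^2 = R^2$.
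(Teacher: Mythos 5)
Your proposal is correct and follows essentially the same route as the paper: both apply formula \eqref{laplace-general-101}, use cyclicity of the trace to bring in $TT^+=\mathbb{I}_n-\tfrac{1}{R^2}{\bf x}{\bf x}^t$ from Lemma \ref{lema1}(ii), substitute $[\text{Hess}F]=2\mathbb{I}_n$ and $\sigma({\bf x})=\tfrac{1}{2R^2}\langle {\bf x},\nabla f({\bf x})\rangle$, and simplify with $\|{\bf x}\|^2=R^2$. No gaps.
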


\begin{proof}
Embedding the result $(ii)$ from the Lemma \ref{lema1} in equation \eqref{laplace-general-101}, we successively get
\begin{align*}
\hspace*{-.2cm}\left(\Delta_{S_R^{n-1}}\widetilde{f}\right)({\bf x})&=\hspace*{-.1cm}\tr\left(\left(\mathbb{I}_n-\dfrac{1}{R^2}{\bf x}\,{\bf x}^t\right)\left[\text{Hess}f\right]({\bf x})\hspace*{0cm}\right)-\frac{\left<{\bf x}, \nabla f({\bf x})\right>}{2R^2}\tr\left(\left(\mathbb{I}_n-\dfrac{1}{R^2}{\bf x}\,{\bf x}^t\right)2\mathbb{I}_n\right) \\
&=\tr\left(\left[\text{Hess}f\right]({\bf x})\right)-\frac{\tr\left({{\bf x}\,{\bf x}^t}\left[\text{Hess}f\right]({\bf x})\right)}{R^2} -\frac{\left<{\bf x}, \nabla f({\bf x})\right>}{R^2}\tr\left(\mathbb{I}_n-\dfrac{1}{R^2}{\bf x}\,{\bf x}^t\right) \\
&=\left(\Delta_{\R^n} f\right)({\bf x})-\frac{n-1}{R^2}\left<{\bf x}, \nabla f({\bf x})\right>-\frac{1}{R^2}{{\bf x}^t}\left[\hbox{Hess}f\right]({\bf x}){\bf x}.
\end{align*}
\end{proof}

%Written in coordinates, formula \eqref{Laplace-sfera-2022}, becomes
%\begin{equation}
%\left(\Delta_{S_R^{n-1}}\widetilde{f}\right)({\bf x})=\left(\Delta_{\R^n}f\right)({\bf x})-\dfrac{1}{R^2}\sum_{i,j=1}^n \frac{\partial^2 f}{\partial x_i\partial x_j}({\bf x})x_ix_j-\frac{n-1}{R^2}\sum_{i=1}^n\frac{\partial f}{\partial x_i}x_i.
%\end{equation}

\noindent Using other arguments, the above result has been previously obtained in the literature, see \cite{atkinson}, \cite{bakry}, and \cite{bakry-2}. 

An interesting particular case is obtained when the prolongation function $f$ is homogeneous of degree $k$ (see \cite{shubin}, \cite{dai}). 

\begin{cor}
Let be $\widetilde{f}:S_R^{n-1}\to \R$ a smooth function and $f:\R^n\to \R$ a prolongation of $\widetilde{f}$, $f$ being a homogeneous\footnote{A function $f:\R^n\to \R$ is called  homogeneous of degree $k\in \mathbb{Z}$ if $f(t{\bf x})=t^kf({\bf x})$, for all $t\in \R^*$ and for all ${\bf x}\in \R^n$.} function of degree $k$. Then,
\begin{equation*}
\Delta_{S_R^{n-1}}\widetilde{f}=\left(\Delta_{\R^n}f\right)_{|S_R^{n-1}}-\frac{k(k+n-2)}{R^2}\widetilde{f}.
\end{equation*} 
\end{cor}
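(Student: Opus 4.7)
The plan is to specialize the formula \eqref{Laplace-sfera-2022} from the preceding theorem using Euler's identity for homogeneous functions. Since $f$ is homogeneous of degree $k$, differentiating $f(t\mathbf{x}) = t^k f(\mathbf{x})$ with respect to $t$ at $t=1$ yields the classical identity
\begin{equation*}
\langle \mathbf{x}, \nabla f(\mathbf{x})\rangle = k f(\mathbf{x}).
\end{equation*}
This immediately takes care of the middle term in \eqref{Laplace-sfera-2022}, contributing $-\tfrac{k(n-1)}{R^2} f(\mathbf{x})$.

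For the quadratic Hessian term, I would observe that each partial derivative $\partial_i f$ is itself homogeneous of degree $k-1$. Applying Euler's identity to $\partial_i f$ gives $\langle \mathbf{x}, \nabla(\partial_i f)(\mathbf{x})\rangle = (k-1)\partial_i f(\mathbf{x})$, i.e., the $i$-th entry of $[\text{Hess}f](\mathbf{x})\mathbf{x}$ equals $(k-1)\partial_i f(\mathbf{x})$. Taking the inner product with $\mathbf{x}$ and using Euler once more produces
\begin{equation*}
\mathbf{x}^t [\text{Hess}f](\mathbf{x})\,\mathbf{x} = (k-1)\langle \mathbf{x}, \nabla f(\mathbf{x})\rangle = k(k-1) f(\mathbf{x}).
\end{equation*}

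Substituting both identities into \eqref{Laplace-sfera-2022} and restricting to $S_R^{n-1}$, where $f = \widetilde{f}$, collects the two correction terms into
\begin{equation*}
-\frac{k(n-1) + k(k-1)}{R^2}\widetilde{f} = -\frac{k(k+n-2)}{R^2}\widetilde{f},
\end{equation*}
which yields exactly the claimed formula. There is no genuine obstacle here; the only point requiring minor care is the second application of Euler's identity to the components of $\nabla f$, but this is routine once one notes that differentiation lowers the degree of homogeneity by one.
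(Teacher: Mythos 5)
Your proposal is correct and follows essentially the same route as the paper: both apply Euler's identity to $f$ for the gradient term and to the (degree $k-1$ homogeneous) first partials for the Hessian term, then substitute into \eqref{Laplace-sfera-2022}. The only difference is that you spell out the derivation of $\mathbf{x}^t[\text{Hess}f](\mathbf{x})\mathbf{x}=k(k-1)f(\mathbf{x})$ slightly more explicitly than the paper does.
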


\begin{proof}
From Euler's theorem for homogeneous functions we have 
$\left<{\bf x}, \nabla f({\bf x})\right>=kf({\bf x})$ and $$\tr\left({{\bf x}\,{\bf x}^t}\left[\text{Hess}f\right]({\bf x})\right)=\sum\limits_{i,j=1}^n \frac{\partial^2 f}{\partial x_i\partial x_j}({\bf x})x_ix_j=k(k-1)f({\bf x}).$$ Substituting these equalities in \eqref{Laplace-sfera-2022} we obtain the desired result.
\end{proof}

\section{\texorpdfstring{Laplace-Beltrami operator on the orthogonal group}{Laplace-Beltrami operator on the orthogonal group}}

For a matrix $U\in  \mathcal{M}_{n}(\R)$, we denote by ${\bf u}_1,...,{\bf u}_n\in \R^n$ the  vectors formed with the columns of the matrix $U$ and consequently, $U$ has the form $U=\left[{\bf u}_1,...,{\bf u}_n\right]$. If $U\in O(n)=\{U\in \mathcal{M}_{n}(\R) \,|\,U^tU=UU^t=\mathbb{I}_n\}$, then the vectors ${\bf u}_1,...,{\bf u}_n\in \R^n$ are orthonormal.
We identify $\mathcal{M}_{n}(\R)$ with $\R^{n^2}$ by the isomorphism $\text{vec}:\mathcal{M}_{n}(\R)\rightarrow \R^{n^2}$ defined by the column vectorization $\text{vec}(U)\stackrel{\text{not}}{=}{\bf u}:=({\bf u}_1^t,...,{\bf u}_n^t)^t$.

The constraint functions $F_{aa},F_{bc}:\R^{n^2}\rightarrow \R$ that describe the orthogonal group as a preimage of a regular value are given by:
\begin{align*}
%\label{constraints}
F_{aa} ({\bf u}) =\frac{1}{2}\|{\bf u}_a\|^2,\,\,1 \leq a\leq n;\,\,\,\,\,\,F_{bc} ({\bf u}) = \left<{\bf u}_b,{\bf u}_c\right>,\,\,1\leq b<c\leq n. 
%\label{constraints-2}
\end{align*}
We have ${\bf F}:\R^{n^2}\rightarrow \R^{\frac{n(n+1)}{2}}$, ${\bf F}:=\left( \dots , F_{aa},\dots ,F_{bc}, \dots\right)$, $O(n)\simeq {\bf F}^{-1}\left( \dots , \frac{1}{2},\dots ,0, \dots\right)$ $\subset \R^{n^2}.$

The tangent space in a point $U\in O(n)$ is given by\footnote{$\text{Skew}_n(\R)=\{\Theta\in \mathcal{M}_n(\R)\,|\,\Theta^t=-\Theta\}$.}
$
\text{T}_UO(n)=\{U\Theta\,|\,\Theta\in \text{Skew}_n(\R)\}.
$
Considering the following basis for the $\frac{n(n-1)}{2}$-dimensional vector space 
$\text{Skew}_n(\R)$, 
$$\Theta_{ab}=(-1)^{a+b}({\bf e}_b{\bf e}_a^t-{\bf e}_a{\bf e}_b^t),\,\, 1\leq a<b\leq n,$$
we construct the local frame on $O(n)$ 
$$W_{ab}(U)=U\Theta_{ab},\,\,1\leq a<b\leq n.$$
The transformation matrix \eqref{transformation-matrix-11} in the point $U$ becomes
$$T=\left[{\bf t}_{12}, {\bf t}_{13},\dots, {\bf t}_{1n},{\bf t}_{23},\dots, {\bf t}_{2n},\dots, {\bf t}_{n-1,n}\right]\in \mathcal{M}_{n^2\times \frac{n(n-1)}{2}}(\R),$$
where
${\bf t}_{ab}(U)=\text{vec}(W_{ab}(U)).$

\begin{lem}\label{T^TT} The transformation matrix $T$ has the property 
$T^tT=2\mathbb{I}_{\frac{n(n-1)}{2}}.$
\end{lem}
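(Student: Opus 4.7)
The plan is to exploit the Frobenius inner product identity $\langle \text{vec}(A), \text{vec}(B)\rangle = \tr(A^tB)$ together with the orthogonality relation $U^tU = \mathbb{I}_n$. This combination reduces the computation of $T^tT$ from an expression that a priori depends on $U$ to a purely combinatorial calculation involving only the matrices $\Theta_{ab}$.

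Concretely, I would first write the $((ab),(cd))$-entry of $T^tT$ as $\langle \text{vec}(U\Theta_{ab}), \text{vec}(U\Theta_{cd})\rangle = \tr\bigl((U\Theta_{ab})^t (U\Theta_{cd})\bigr) = \tr(\Theta_{ab}^t U^tU \Theta_{cd})$. Since $U \in O(n)$, the factor $U^tU$ collapses to $\mathbb{I}_n$, leaving $\tr(\Theta_{ab}^t\Theta_{cd})$. In particular, $T^tT$ is independent of the chosen point $U\in O(n)$, which already foreshadows the constant-matrix conclusion.

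Next I would expand $\Theta_{ab}^t \Theta_{cd}$ using the definition $\Theta_{ab} = (-1)^{a+b}({\bf e}_b{\bf e}_a^t - {\bf e}_a{\bf e}_b^t)$. The four resulting rank-one products simplify via the elementary rule ${\bf e}_i{\bf e}_j^t \cdot {\bf e}_k{\bf e}_\ell^t = \delta_{jk}{\bf e}_i{\bf e}_\ell^t$, and taking traces yields $\tr(\Theta_{ab}^t\Theta_{cd}) = (-1)^{a+b+c+d}\cdot 2(\delta_{ac}\delta_{bd} - \delta_{ad}\delta_{bc})$. The strict ordering restrictions $a<b$ and $c<d$ then force $\delta_{ad}\delta_{bc}=0$ (equality would demand the impossible chain $a<b=c<d=a$), while $\delta_{ac}\delta_{bd}=1$ forces $(a,b)=(c,d)$, in which case $(-1)^{a+b+c+d}=1$. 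The entries of $T^tT$ are therefore $2$ on the diagonal and $0$ off it, as claimed.

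There is no substantive obstacle here; the only bookkeeping point that requires attention is keeping the sign $(-1)^{a+b}$ consistent with the strict ordering $a<b$, so that the cross-product terms in the expansion of $\Theta_{ab}^t\Theta_{cd}$ are matched with the correct basis pairs and no Kronecker delta is inadvertently made non-zero.
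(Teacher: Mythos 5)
Your proof is correct and follows essentially the same route as the paper: write the $((ab),(cd))$ entry of $T^tT$ as $\tr\bigl((U\Theta_{ab})^t(U\Theta_{cd})\bigr)$, collapse $U^tU$ to $\mathbb{I}_n$, and evaluate $\tr(\Theta_{ab}^t\Theta_{cd})=2\delta_{ij}$. The only difference is that you carry out this last trace computation explicitly (correctly handling the signs and the ordering $a<b$, $c<d$), whereas the paper delegates it to the reference \cite{second-order}.
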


\begin{proof}
For $U\in O(n)$ we compute the $(i,j)$ element of $T^tT$, $i,j\in \{1,\dots,\frac{n(n-1)}{2}\}$. The index $i$ corresponds to the vector $W_{ab}$ in the chosen local frame and the index $j$ corresponds to the vector $W_{cd}$. We obtain, see \cite{second-order}:
\begin{align*}
(T^tT)_{ij} & = \left<{\bf t}_{ab},{\bf t}_{cd}\right>==\tr(W_{ab}^tW_{cd}) 
 = \tr((U\Theta_{ab})^t(U\Theta_{cd}))=\tr(\Theta_{ab}^tU^tU\Theta_{cd})=\tr(\Theta_{ab}^t\Theta_{cd})
 = 2\delta_{ij}.
\end{align*}
\end{proof}

\begin{lem}\label{TT^t} For $U=\left[{\bf u}_1,...,{\bf u}_n\right]\in O(n)$ the transformation matrix $T$ has the property
$$
TT^t=\mathbb{I}_{n^2}-\left[\begin{array}{c|c|c}
\textbf{u}_1 \textbf{u}_1^t&\dots&\textbf{u}_n \textbf{u}_1^t\\
\hline
\vdots&\ddots&\vdots\\
\hline
\rule{0pt}{12pt}\textbf{u}_1 \textbf{u}_n^t&\dots&\textbf{u}_n \textbf{u}_n^t\end{array}\right].
$$
\end{lem}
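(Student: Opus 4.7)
The plan is to compute $TT^t$ blockwise by exploiting the sparsity of each column $\mathbf{t}_{ab}$ of $T$, and then use the row-orthonormality $UU^t=\mathbb{I}_n$ (equivalently $\sum_k \mathbf{u}_k\mathbf{u}_k^t = \mathbb{I}_n$) to collapse the diagonal blocks.

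First, I would rewrite $W_{ab}(U) = U\Theta_{ab}$ using the definition of $\Theta_{ab}$:
\begin{equation*}
U\Theta_{ab} = (-1)^{a+b}\bigl(U\mathbf{e}_b\mathbf{e}_a^t - U\mathbf{e}_a\mathbf{e}_b^t\bigr) = (-1)^{a+b}\bigl(\mathbf{u}_b\mathbf{e}_a^t - \mathbf{u}_a\mathbf{e}_b^t\bigr).
\end{equation*}
Since the $k$-th column of $\mathbf{u}_b\mathbf{e}_a^t$ is $\delta_{ka}\mathbf{u}_b$, applying $\mathrm{vec}$ shows that $\mathbf{t}_{ab}$, viewed as a column vector of $n$ blocks of height $n$, has only two nonzero blocks: block $a$ equals $(-1)^{a+b}\mathbf{u}_b$, and block $b$ equals $-(-1)^{a+b}\mathbf{u}_a$.

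Next, I would write $TT^t = \sum_{a<b}\mathbf{t}_{ab}\mathbf{t}_{ab}^t$ and read off the four nonzero $n\times n$ blocks of each summand. The sign $(-1)^{a+b}$ appears squared on the $(a,a)$ and $(b,b)$ diagonal blocks and cancels to $-1$ on the $(a,b)$ and $(b,a)$ blocks, yielding
\begin{equation*}
(\mathbf{t}_{ab}\mathbf{t}_{ab}^t)_{\text{block}(a,a)} = \mathbf{u}_b\mathbf{u}_b^t,\quad (\mathbf{t}_{ab}\mathbf{t}_{ab}^t)_{\text{block}(b,b)} = \mathbf{u}_a\mathbf{u}_a^t,
\end{equation*}
\begin{equation*}
(\mathbf{t}_{ab}\mathbf{t}_{ab}^t)_{\text{block}(a,b)} = -\mathbf{u}_b\mathbf{u}_a^t,\quad (\mathbf{t}_{ab}\mathbf{t}_{ab}^t)_{\text{block}(b,a)} = -\mathbf{u}_a\mathbf{u}_b^t.
\end{equation*}

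Summing over $a<b$, for an off-diagonal block $(c,d)$ with $c\ne d$ exactly one pair $(a,b)$ (namely $\{a,b\}=\{c,d\}$ in the appropriate order) contributes, giving $-\mathbf{u}_d\mathbf{u}_c^t$. For a diagonal block $(c,c)$ the contributions come from all pairs with $a=c$ or $b=c$, producing $\sum_{k\ne c}\mathbf{u}_k\mathbf{u}_k^t$. The key simplification is then the identity $\sum_{k=1}^n \mathbf{u}_k\mathbf{u}_k^t = UU^t = \mathbb{I}_n$, which turns the diagonal block into $\mathbb{I}_n - \mathbf{u}_c\mathbf{u}_c^t$. Assembling the blocks gives the $(c,d)$ block of $TT^t$ as $\delta_{cd}\mathbb{I}_n - \mathbf{u}_d\mathbf{u}_c^t$, which is exactly the $(c,d)$ block of the right-hand side in the statement.

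The only real obstacle is bookkeeping: ensuring that the block $(c,d)=(a,b)$ receives $-\mathbf{u}_b\mathbf{u}_a^t$ rather than $-\mathbf{u}_a\mathbf{u}_b^t$ (the order is forced by the $\mathrm{vec}$ convention and the placement $\mathbf{u}_b\mathbf{e}_a^t$), so that the off-diagonal pattern matches $\mathbf{u}_d\mathbf{u}_c^t$ in the displayed matrix. Once indices are pinned down, the orthogonality $UU^t=\mathbb{I}_n$ does the rest of the work.
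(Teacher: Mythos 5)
Your proposal is correct and follows essentially the same route as the paper: both compute $TT^t$ blockwise from the explicit form $W_{ab}=(-1)^{a+b}(\mathbf{u}_b\mathbf{e}_a^t-\mathbf{u}_a\mathbf{e}_b^t)$ and collapse the diagonal blocks using $UU^t=\mathbb{I}_n$. The only difference is organizational: you decompose $TT^t=\sum_{a<b}\mathbf{t}_{ab}\mathbf{t}_{ab}^t$ and exploit the two-block sparsity of each $\mathbf{t}_{ab}$, whereas the paper fixes a block position $(i,j)$ and sums the entrywise products over all pairs $(a,b)$ using Kronecker deltas --- the same computation with the order of summation interchanged.
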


\begin{proof} Denote by $I=\{(a,b)\in \{1,2,\dots,n\}\times \{1,2,\dots,n\}|~a<b\}$, the set being lexicographically  ordered.
We organize the matrix $T$ as a column stack of the matrices $S_i$, $i\in \{1,2,\dots,n\}$, each of these matrices having dimension $n\times \frac{n(n-1)}{2}$. The matrix $S_i$ is formed with the $i^{th}$ columns of the matrices $W_{ab}$, with $(a,b)\in I$. More precisely, we have
$$T=\left[\begin{array}{c}
S_1\\
\hline
\vdots\\
\hline
S_n
\end{array}\right];~~~T^t=\left[\begin{array}{c|c|c}S_1^t&\dots&S_n^t
\end{array}\right].$$
Then, the product $TT^t$ can be written in a $n\times n$ blocks form, with the $(i,j)$ block being the $n\times n$ product matrix $S_iS_j^t$, $i,j\in \{1,2,\dots,n\}$. \\
Next, we notice that the $(p,q)$ element in the matrix $W_{rs}$ is given by
$$
(-1)^{r+s}\sum_{v=1}^n u_{pv}\cdot \left(\textbf{e}_s\textbf{e}_r^t-\textbf{e}_r \textbf{e}_s^t\right)_{vq}=(-1)^{r+s}\sum_{v=1}^nu_{pv}\left(\delta_{vr}\delta_{qs}-\delta_{vs}\delta_{qr}\right)=(-1)^{r+s}\left(u_{pr}\delta_{qs}-u_{ps}\delta_{qr}\right).$$
We are now ready to compute {the} ${(k,l)}$ {element} {of the block} ${S_iS_j^t}$, $i,j,k,l\in \{1,\dots,n\}$, as being the dot product of the $k^{th}$ row of the matrix $S_i$ with the $l^{th}$ row of the matrix $S_j$, taking into account that the first mentioned row is formed with the $k^{th}$ row element in the $i^{th}$ column of {all} matrices $W_{ab},(a,b)\in I$, and that the second mentioned row is formed with the $l^{th}$ row element in the $j^{th}$ column of {all} matrices $W_{ab},(a,b)\in I$:
%\newpage
\begin{align*}
(S_iS_j^t)_{kl} &=\sum_{(a,b)\in I}(-1)^{a+b}\left(u_{ka}\delta_{ib}-u_{kb}\delta_{ia}\right)\cdot (-1)^{a+b}\left(u_{la}\delta_{jb}-u_{lb}\delta_{ja}\right)\\
&=\sum_{(a,b)\in I}\left(u_{ka}\delta_{ib}u_{la}\delta_{jb}-u_{ka}\delta_{ib}u_{lb}\delta_{ja}-u_{kb}\delta_{ia}u_{la}\delta_{jb}+u_{kb}\delta_{ia}u_{lb}\delta_{ja}\right)\\
&=\left(\sum_{(a,b)\in I}u_{ka}u_{la}\delta_{ib}\delta_{jb}+\sum_{(a,b)\in I}u_{kb}u_{lb}\delta_{ia}\delta_{ja}\right)-\left(\sum_{(a,b)\in I}u_{ka}u_{lb}\delta_{ib}\delta_{ja}+\sum_{(a,b)\in I}u_{kb}u_{la}\delta_{ia}\delta_{jb}\right)\\
&=\sum_{a\ne b}u_{ka}u_{la}\delta_{ib}\delta_{jb}-\sum_{a\ne b}u_{ka}u_{lb}\delta_{ib}\delta_{ja}.
\end{align*}
For $i\ne j$ the first of the above two terms vanishes, while the second term is $-u_{kj}u_{li}$. For $i=j$ the second of the above two terms vanishes, while the first term is
$$\sum_{a\ne i}u_{ka}u_{la}=\left(\sum_{a=1}^nu_{ka}u_{la}\right)-u_{ki}u_{li}=\delta_{kl}-u_{ki}u_{li}.$$
Therefore, we have obtained that
$$(S_iS_j^t)_{kl}=\left\{\begin{array}{ll}
\delta_{kl}-u_{ki}u_{li},&\hbox{if}~~i=j\\
-u_{kj}u_{li}, &\hbox{if}~~i\ne j\end{array}\right. =\delta_{ij}\delta_{kl}-u_{kj}u_{li},$$
which completely proves the result. 
\end{proof}

We introduce the following notation:
$$\Lambda(U):=\left[\begin{array}{c|c|c}
\textbf{u}_1 \textbf{u}_1^t&\dots&\textbf{u}_n \textbf{u}_1^t\\
\hline
\vdots&\ddots&\vdots\\
\hline
\rule{0pt}{12pt}\textbf{u}_1 \textbf{u}_n^t&\dots&\textbf{u}_n \textbf{u}_n^t\end{array}\right].$$
Note that $\Lambda(U)$ is an involution matrix $\left((\Lambda(U))^2=\mathbb{I}_{n^2}\right)$, having the eigenvalues 1 (of multiplicity $\dfrac{n(n+1)}{2}$) and -1 (of multiplicity $\dfrac{n(n-1)}{2}$).

For a smooth function $f:\mathcal{M}_n(\R)\to \R$ we denote $\widehat{f}:\R^{n^2}\to \R$, $\widehat{f}:=f\circ\text{vec}^{-1}$ and
$$\nabla f(U):=\text{vec}^{-1}\left(\nabla_{\text{Euc}}\widehat{f}(\text{vec}(U))\right),\,\,\Delta f(U):=\Delta_{\text{Euc}}\widehat{f}(\text{vec}(U)),$$
$$[\text{Hess}f](U):=[\text{Hess}\widehat{f}](\text{vec}(U))=\left[\begin{array}{ccc}
\left[\frac{\partial^2 \widehat{f}(\text{vec}(U))}{\partial {\bf u}_1\partial {\bf u}_1}\right] &\dots&\left[\frac{\partial^2 \widehat{f}(\text{vec}(U))}{\partial {\bf u}_1\partial {\bf u}_n}\right]\\
\vdots&\ddots&\vdots\\
\rule{0pt}{12pt}\left[\frac{\partial^2 \widehat{f}(\text{vec}(U))}{\partial {\bf u}_n\partial {\bf u}_1}\right]&\dots&\left[\frac{\partial^2 \widehat{f}(\text{vec}(U))}{\partial {\bf u}_n\partial {\bf u}_n}\right]\end{array}\right],$$
where
$$\left[\frac{\partial^2 \widehat{f}}{\partial {\bf u}_i\partial {\bf u}_j}\right]:=\left[\begin{array}{ccc}
\frac{\partial^2 \widehat{f}}{\partial {u}_{1i}\partial {u}_{1j}} &\dots&\frac{\partial^2 \widehat{f}}{\partial {u}_{1i}\partial {u}_{nj}}\\
\vdots&\ddots&\vdots\\
\rule{0pt}{12pt}\frac{\partial^2\widehat{f}}{\partial {u}_{ni}\partial {u}_{1j}}&\dots&\frac{\partial^2 \widehat{f}}{\partial {u}_{ni}\partial {u}_{nj}}\end{array}\right].$$

On the orthogonal group $O(n)$ we consider the bi-invariant metric induced by the Frobenius metric on $\mathcal{M}_n(\R)$ and the Laplace-Beltrami operator on $O(n)$ is computed with respect to this metric.

The following result provides a formula for the Laplace-Beltrami operator on the orthogonal group, written in the ambient (Euclidean) coordinates of $\mathcal{M}_n(\R)$.

\begin{thm}\label{teorema-principala}
Let be $\widetilde{f}:O(n)\to \R$ a smooth function and $f:\mathcal{M}_n(\R)\to \R$ a smooth prolongation of $\widetilde{f}$. Then, for $U\in O(n)$, we have 
\begin{equation*}
\Delta_{O(n)}\widetilde{f}(U)=\frac{1}{2}\Delta f(U)-\frac{n-1}{2}\tr\left(U^t[\nabla f](U)\right)-\frac{1}{2}\tr\left(\Lambda(U)[\emph{Hess}f](U)\right).
\end{equation*}
\end{thm}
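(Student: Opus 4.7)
\emph{Strategy.} My plan is to apply the ambient formula \eqref{laplace-general-101} from Theorem~\ref{teorema-generala} to the local frame just constructed on $O(n)$, exploiting Lemmas~\ref{T^TT} and~\ref{TT^t}. Since $T^tT = 2\mathbb{I}_{n(n-1)/2}$, the left Moore--Penrose inverse simplifies to $T^+ = \tfrac{1}{2}T^t$, and the cyclic property of trace together with $TT^t = \mathbb{I}_{n^2} - \Lambda(U)$ gives the key reduction
$$\tr\bigl(T^+\,A\,T\bigr) = \tfrac{1}{2}\tr(A) - \tfrac{1}{2}\tr\bigl(\Lambda(U)\,A\bigr)$$
valid for every $n^2\times n^2$ matrix $A$. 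Taking $A = [\text{Hess}f](U)$ already yields the first two pieces of the claimed formula, namely $\tfrac{1}{2}\Delta f(U) - \tfrac{1}{2}\tr\bigl(\Lambda(U)[\text{Hess}f](U)\bigr)$.

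\emph{Hessians of the constraints.} Direct differentiation shows that $[\text{Hess}F_{aa}]$ is the block matrix with a single $\mathbb{I}_n$ in position $(a,a)$ and zeros elsewhere, while $[\text{Hess}F_{bc}]$ has $\mathbb{I}_n$ in the blocks $(b,c)$ and $(c,b)$. Applying the block-trace identity $\tr(XY)=\sum_{i,j}\tr(X_{ij}Y_{ji})$ to the reduction above, and using orthonormality of the columns of $U$ (so that $\tr({\bf u}_a{\bf u}_a^t)=1$ and $\langle {\bf u}_b,{\bf u}_c\rangle=0$ for $b\neq c$), I expect
$$\tr\bigl(T^+[\text{Hess}F_{aa}]T\bigr) = \frac{n-1}{2}, \qquad \tr\bigl(T^+[\text{Hess}F_{bc}]T\bigr) = 0.$$
Hence only the diagonal constraints contribute to the sum in \eqref{laplace-general-101}.

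\emph{Lagrange multipliers and assembly.} It remains to evaluate $\sum_{a=1}^n \sigma_{aa}$. A short Euclidean-gradient computation shows that on $O(n)$ the Gram matrix of $\{\nabla F_\alpha\}$ is \emph{diagonal}, with entry $1$ at each $F_{aa}$ slot and entry $2$ at each $F_{bc}$ slot; all cross inner products vanish by orthonormality of the columns of $U$. Cramer's rule applied to \eqref{sigma-101} then collapses to $\sigma_{aa} = \langle \nabla F_{aa}, \nabla f\rangle$, which in matrix form is precisely the $a$-th diagonal entry of $U^t[\nabla f](U)$. Summing over $a$ gives $\sum_a \sigma_{aa} = \tr\bigl(U^t[\nabla f](U)\bigr)$, and combining the three pieces with the minus sign of \eqref{laplace-general-101} produces the announced formula. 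The one step requiring genuine care is this diagonality of the Gram matrix: verifying that $\langle \nabla F_{aa}, \nabla F_{bc}\rangle$ and $\langle \nabla F_{bc}, \nabla F_{b'c'}\rangle$ (for distinct index pairs) vanish on $O(n)$ is what decouples the Lagrange multipliers and collapses the sum into the clean trace $\tr(U^t[\nabla f](U))$.
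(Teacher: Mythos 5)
Your proof is correct, and while it follows the paper's skeleton for the first half, it handles the constraint term by a genuinely more self-contained route. Like the paper, you start from \eqref{laplace-general-101}, use Lemma \ref{T^TT} to get $T^+=\tfrac12 T^t$ and Lemma \ref{TT^t} to get the reduction $\tr(T^+AT)=\tfrac12\tr(A)-\tfrac12\tr(\Lambda(U)A)$, which immediately yields the terms $\tfrac12\Delta f-\tfrac12\tr(\Lambda(U)[\text{Hess}f](U))$. Where you diverge is the constraint contribution: the paper imports wholesale from \cite{second-order} and \cite{first-order} the packaged identity $\sum_\alpha\sigma_\alpha[\text{Hess}F_\alpha]=\Sigma\otimes\mathbb{I}_n$ with the Lagrange-multiplier matrix $\Sigma(U)=\tfrac12\left([\nabla f]^t(U)U+U^t[\nabla f](U)\right)$, and then computes $\tfrac12\tr\left((\mathbb{I}_{n^2}-\Lambda)(\Sigma\otimes\mathbb{I}_n)\right)=\tfrac{n-1}{2}\tr(\Sigma)$ using the block structure of $\Lambda(\Sigma\otimes\mathbb{I}_n)$. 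You instead compute everything from the definitions: the block Hessians of $F_{aa}$ and $F_{bc}$, the traces $\tr(T^+[\text{Hess}F_{aa}]T)=\tfrac{n}{2}-\tfrac12\tr({\bf u}_a{\bf u}_a^t)=\tfrac{n-1}{2}$ and $\tr(T^+[\text{Hess}F_{bc}]T)=0$ (the latter because $\tr({\bf u}_b{\bf u}_c^t)=0$), and the multipliers $\sigma_{aa}$ via the diagonality of the Gram matrix of the constraint gradients on $O(n)$ --- all of which check out. The payoff of your version is that it does not rely on the external reference for $\Sigma$ and makes visible why the off-diagonal constraints $F_{bc}$ drop out entirely (their multipliers $\sigma_{bc}$ never need to be evaluated); the cost is that you must verify the Gram matrix is diagonal with entries $1$ and $2$, a step the paper's aggregated formula sidesteps. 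Both arguments are complete and give the stated formula.
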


\begin{proof}
For $O(n)$ we have, see \cite{second-order}:
$$[\text{Hess}_{O(n)}\widetilde{f}](U)=\left([\text{Hess}f](U)-\Sigma(U)\otimes \mathbb{I}_n\right)_{|T_UO(n)\times T_UO(n)},$$
where (\cite{first-order}, pp. 1779)
$$\Sigma(U)=\frac{1}{2}\left([\nabla f]^t(U)U+U^t[\nabla f](U)\right)$$
and, by Lemma \ref{T^TT}, 
$[{\bf g}_c]^{-1}=(T^tT)^{-1}=\frac{1}{2}\mathbb{I}_{\frac{n(n-1)}{2}}$ and therefore $T^+=\frac{1}{2}T^t$. We denote the elements of the matrix $\Sigma$ by $\sigma_{ij},~i,j\in \{1,\dots,n\}$.

Using Lemma \ref{TT^t}, the formula \eqref{laplace-general-101} from Theorem \ref{teorema-generala} becomes
\begin{align*}
\Delta_{O(n)}\widetilde{f} & = \tr\left(T^+\left([\text{Hess}f]-\Sigma\otimes \mathbb{I}_n\right)T\right) = \frac{1}{2}\tr\left(TT^t\left([\text{Hess}f]-\Sigma\otimes \mathbb{I}_n\right)\right) \\
& =  \frac{1}{2}\tr\left(\left(\mathbb{I}_{n^2}-\Lambda\right)\left([\text{Hess}f]-\Sigma\otimes \mathbb{I}_n\right)\right) 
 =  \frac{1}{2}\tr\left([\text{Hess}f]-\Sigma\otimes \mathbb{I}_n-\Lambda [\text{Hess}f]+\Lambda (\Sigma\otimes \mathbb{I}_n)\right) \\
& = \frac{1}{2}\Delta {f}-\frac{1}{2}\tr(\Sigma)\tr(\mathbb{I}_n)-\frac{1}{2}\tr(\Lambda [\text{Hess}f])+\frac{1}{2}\tr(\Lambda (\Sigma\otimes \mathbb{I}_n)) \\
& = \frac{1}{2}\Delta {f}-\frac{n}{2}\tr(\Sigma)-\frac{1}{2}\tr(\Lambda [\text{Hess}f])+\frac{1}{2}\tr(\Lambda (\Sigma\otimes \mathbb{I}_n)).
\end{align*}
From the above expression of the matrix $\Sigma$ we obtain $\tr(\Sigma(U))=\tr\left(U^t[\nabla f](U)\right)$. Also, we have 
$$\Lambda(U) (\Sigma\otimes \mathbb{I}_n)=\left[\begin{array}{c|c|c}
\textbf{u}_1 \textbf{u}_1^t&\dots&\textbf{u}_n \textbf{u}_1^t\\
\hline
\vdots&\ddots&\vdots\\
\hline
\rule{0pt}{7pt}\textbf{u}_1 \textbf{u}_n^t&\dots&\textbf{u}_n \textbf{u}_n^t\end{array}\right]\cdot \left[\begin{array}{c|c|c}
\sigma_{11}\mathbb{I}_n&\dots& \sigma_{1n}\mathbb{I}_n\\
\hline
\vdots&\ddots&\vdots\\
\hline
\rule{0pt}{12pt}\sigma_{n1}\mathbb{I}_n&\dots&\sigma_{nn}\mathbb{I}_n\end{array}\right].
%=\left[\begin{array}{c|c|c}
%(\textbf{u}_1 \textbf{u}_1^t)\Sigma&\dots& *\\
%\hline
%\vdots&\ddots&\vdots\\
%\hline
%\rule{0pt}{7pt} * &\dots&(\textbf{u}_n \textbf{u}_n^t)\Sigma\end{array}\right].
$$
The diagonal block of position $(k,k)$ in the above block matrix multiplication, with $k\in \{1,\dots,n\}$, is
$\sum\limits_{i=1}^n\sigma_{ik}\textbf{u}_i \textbf{u}_k^t$.
Therefore, using the orthogonality constraints, its trace is $$\sum_{i=1}^n\sigma_{ik}\tr(\textbf{u}_i \textbf{u}_k^t)
=\sum_{i=1}^n\sigma_{ik}\tr(\textbf{u}_k^t \textbf{u}_i)=\sum_{i=1}^n\sigma_{ik}(\textbf{u}_k^t \textbf{u}_i)=\sum_{i=1}^n\sigma_{ik}\delta_{ik}=\sigma_{kk}.
$$
It follows that
$$\tr(\Lambda(U) (\Sigma\otimes \mathbb{I}_n))=\tr(\Sigma),$$
which concludes the proof.
\end{proof}

%Using a different approach, in \cite{bakry}, \cite{bakry-3}, and \cite{bakry-2}, the Laplace-Beltrami operator applied to  various functions defined on {\it SO(n)} has been computed.   
In the next two subsections, we apply the formula for the Laplacian on the orthogonal group for a few functions of interest defined on $O(n)$.
\subsection{Laplace-Beltrami operator for some power sum symmetric functions}

In the context of analyzing the heat kernel on the orthogonal group, the Laplacian of certain power sum symmetric functions defined on the orthogonal group needs to be computed, see \cite{levy}, \cite{fulman}. More precisely, these functions are
$$\widetilde p_1(U)=\tr(AU);~\widetilde p_{1,1}(U)=(\tr(AU))^2;~\widetilde p_2(U)=\tr((AU)^2),$$
where $A$ is a given matrix from $\mathcal{M}_{n}(\R)$. We denote by $p_1,p_{1,1},p_2:\mathcal{M}_{n}(\R)\to \R$ their natural prolongations.
By the Riesz representation theorem, any linear function on $\mathcal{M}_{n}(\R)$ is of the form $p_1$.
 
We have $[\nabla p_1](U)=A^t$, $\Delta p_1(U)=0$, and $[\text{Hess}\,p_1](U)=O_{n^2}$. From Theorem \ref{teorema-principala}, we obtain, for $U\in O(n)$,
$$\Delta_{O(n)}\widetilde{p}_1(U)=-\frac{n-1}{2}\wt{p}_1(U).$$
This equality shows that the restriction to $O(n)$ of the linear function $p_1$ is an eigenvector for the Laplace-Beltrami operator, with the associated eigenvalue $-\frac{n-1}{2}$.

We proceed with the computations for the Laplacian of the function $\widetilde{p}_{1,1}$. We have $[\nabla p_{1,1}](U)=2\tr(AU)A^t$. If we denote $B:=A^t$, then we straightforwardly obtain $[\text{Hess}\,p_{1,1}](U)=2\text{vec}(B)\text{vec}(B)^t$. Therefore, $\Delta p_{1,1}(U)=2\tr\left(\text{vec}(B)\text{vec}(B)^t\right)=2\text{vec}(B)^t\text{vec}(B)=2\tr(B^tB)=2\tr(AA^t)$. If we denote the columns of $B$ by ${\bf b}_1,\dots,{\bf b}_n$, then 
$$[\text{Hess}\,p_{1,1}](U)=2\left[\begin{array}{c|c|c}
\textbf{b}_1 \textbf{b}_1^t&\dots&\textbf{b}_1 \textbf{b}_n^t\\
\hline
\vdots&\ddots&\vdots\\
\hline
\rule{0pt}{12pt}\textbf{b}_n \textbf{b}_1^t&\dots&\textbf{b}_n \textbf{b}_n^t\end{array}\right].$$
For $k\in \{1,\dots,n\}$ the $(k,k)$ block matrix of the product $\Lambda[\hbox{Hess}~p_{1,1}]$ is
$$2\sum\limits_{j=1}^n \left({\bf u}_j{\bf u}_k^t\right)\left({\bf b}_j{\bf b}_k^t\right)=2
\sum\limits_{j=1}^n\left({\bf u}_k^t{\bf b}_j\right)\left({\bf u}_j{\bf b}_k^t\right),$$
therefore 
$$\tr\left(\Lambda[\hbox{Hess}~p_{1,1}]\right)=2\sum_{k=1}^n\sum_{j=1}^n\left({\bf u}_k^t{\bf b}_j\right)\left({\bf u}_j^t{\bf b}_k\right).$$
On the other side, we have
$AU=\left[{\bf u}_j^t{\bf b}_i\right]_{i,j}$ and, for $k\in \{1,\dots,n\}$, the $(k,k)$ element of the matrix $(AU)^2$ is
$\sum\limits_{j=1}^n\left({\bf u}_j^t{\bf b}_k\right)\left({\bf u}_k^t{\bf b}_j\right),$
hence
$${p}_2(U)=\tr((AU)^2)=\sum_{k=1}^n\sum_{j=1}^n\left({\bf u}_j^t{\bf b}_k\right)\left({\bf u}_k^t{\bf b}_j\right)=\frac{1}{2}\cdot \tr\left(\Lambda[\hbox{Hess}~p_{1,1}]\right).$$
Substituting in the formula from Theorem 3.3 we obtain, for $U\in O(n)$, that
\begin{align*}
\Delta_{O(n)}\widetilde{p}_{1,1}(U)&=\tr(AA^t)-(n-1)\tr(U^t\tr(AU)A^t)-
\widetilde{p}_2(U)\\
&=\tr(AA^t)-(n-1)\tr(AU)\tr(U^tA^t)-\widetilde{p}_2(U)\\
&=\tr(AA^t)-(n-1)\widetilde{p}_{1,1}(U)-\widetilde{p}_2(U).
\end{align*}

We proceed now with the computations for the Laplacian of the function $\widetilde{p}_{2}$. We have $[\nabla p_{2}](U)=2A^tU^tA^t$. We straightforwardly obtain 
$$[\text{Hess}\,p_{2}](U)=2\left[\begin{array}{c|c|c}
\textbf{b}_1 \textbf{b}_1^t&\dots&\textbf{b}_n \textbf{b}_1^t\\
\hline
\vdots&\ddots&\vdots\\
\hline
\rule{0pt}{12pt}\textbf{b}_1 \textbf{b}_n^t&\dots&\textbf{b}_n \textbf{b}_n^t\end{array}\right].$$
Therefore, $\Delta p_{2}(U)=\tr\left([\text{Hess}\,p_{2}](U)\right)=2\tr(AA^t)$. For $k\in \{1,\dots,n\}$ the $(k,k)$ block matrix of the product $\Lambda[\hbox{Hess}~p_2]$ is
$2\sum\limits_{j=1}^n \left({\bf u}_j{\bf u}_k^t\right)\left({\bf b}_k{\bf b}_j^t\right),$
hence $$\tr\left(\Lambda[\hbox{Hess}~p_2(U)]\right)=2\sum_{k=1}^n\sum_{j=1}^n \left({\bf u}_k^t{\bf b}_k\right)\left({\bf u}_j^t{\bf b}_j\right)=2\left(\sum_{k=1}^n{\bf u}_k^t{\bf b}_k\right)^2=2\left(\tr(AU)\right)^2=2p_{1,1}(U).$$
Substituting in the formula from Theorem 3.3 we obtain, for $U\in O(n)$, that
\begin{align*}
\Delta_{O(n)}\widetilde{p}_{2}(U)&=\tr(AA^t)-(n-1)\tr(U^tA^tU^tA^t)-\widetilde{p}_{1,1}(U)\\
&=\tr(AA^t)-(n-1)\tr((AU)^2)-\widetilde{p}_{1,1}(U)\\
&=\tr(AA^t)-(n-1)\widetilde{p}_{2}(U)-\widetilde{p}_{1,1}(U).
\end{align*}
These equalities show that the restriction to $O(n)$ of the function $p_{1,1}-p_2$ is an eigenvector for the Laplace-Beltrami operator, with the associated eigenvalue $-(n-2)$.

\subsection{Laplace-Beltrami operator for the Brockett function}

We consider the Brockett function $\widetilde{G}:O(n)\to \R$, 
$$\wt{G}(U)=\tr(U^tAUN),$$
with $A,N\in \mathcal{M}_{n}(\R)$, $A=A^t$ and $N=\text{diag}(\mu_1,\dots,\mu_n)$, which appears in the well-known Brockett optimization problem (see \cite{brockett}, \cite{first-order}, \cite{second-order}). Its natural prolongation $G:\mathcal{M}_n(\R)\to \R$ is given by
$G(U)=\tr(U^tAUN)$.
We have, see \cite{first-order}, \cite{second-order}, that 
$$[\nabla G](U)=2AUN,\,\,\,[\text{Hess}\,G](U)=2N\otimes A.$$
Also,  
$\Delta G(U)=2\tr (N)\tr(A)$,
$\tr(U^t[\nabla G](U))=2G(U)$, and
$$\Lambda(U)[\text{Hess}\,G](U)=2\begin{pmatrix}
\mu_1({\bf u}_1{\bf u}_1^t)A & * & \dots & * \\
* & \mu_2({\bf u}_2{\bf u}_2^t)A & \dots & * \\
\vdots & \vdots & \ddots & \vdots \\
* & * & \dots & \mu_n({\bf u}_n{\bf u}_n^t)A
\end{pmatrix}.$$
Therefore 
$$\tr(\Lambda(U)[\text{Hess}\,G](U))=2\tr((\mu_1{\bf u}_1{\bf u}_1^t+\dots +\mu_n{\bf u}_n{\bf u}_n^t)A).$$
For $U\in O(n)$ we obtain the formula for the Laplace-Beltrami operator of the Brockett function
$$\Delta_{O(n)}\wt{G}(U)=-(n-1)\wt{G}(U)+\tr(N)\tr(A)-\tr((\mu_1{\bf u}_1{\bf u}_1^t+\dots +\mu_n{\bf u}_n{\bf u}_n^t)A).$$

\medskip

\noindent {\bf Acknowledgments.} We are thankful to our friend and colleague \'Arp\'ad B\'enyi for the useful discussions about the manuscript.

\end{document}